\documentclass[10pt,aps,prb,twocolumn, nofootinbib]{revtex4-1}

\usepackage{amssymb}
\usepackage{amsmath}
\usepackage{amsthm}
\usepackage{subfig}
\usepackage{graphicx}
\usepackage{color}
\usepackage{hyperref}
\newtheorem{theorem}{Theorem}
\usepackage{listings}
\usepackage{tkz-euclide}
\usepackage{textcomp}
\hypersetup{        
    unicode=false,          
    pdftoolbar=true,        
    pdfmenubar=true,        
    pdffitwindow=false,     
    pdfstartview={FitH},    
    pdfauthor={Christopher Abel},     
    colorlinks=true,       
    linkcolor=blue,          
    citecolor=red,        
}
\begin{document}



\title{Deriving Kepler\textquotesingle s Laws Using Quaternions}

\author{Christopher J. Abel}



\begin{abstract}

In the past, Kepler painstakingly derived laws of planetary motion using difficult to understand and hard to follow techniques. In 1843 William Hamilton created and described the quaternions, which extend the complex numbers and can easily describe rotations in three dimensional space. In this article, we will harness this system to provide a new and intuitive way to derive Kepler's laws. This will include using a quaternionic version of the spatial Kepler problem differential equation, and using the general solution to describe the motion of planets orbiting a central body. We use the standard method for regularizing celestial mechanics, but this article will be solely focused on showing the validity of Kepler's laws.
\end{abstract}



\maketitle

\section{The Star Conjugate}

In order to describe the KS transformation in the form of quaternions we require a new operation that will help remove the fourth dimension for any general quaternion. This setup was suggested by \citet{Waldvogel2006}. We define the \textit{star conjugate} of a quaternion $\boldsymbol{q}\in\mathbb{H}$ to be
\begin{equation}\label{equation:3.1}
\boldsymbol{q}^*=q_0+q_1i+q_2j-q_3k.
\end{equation}
Note that all the star conjugate does is negate the $k$ term. We can easily represent this in terms of the quaternion conjugate,
\begin{equation}\label{equation:3.2}
\boldsymbol{q}^*=(-k)\overline{\boldsymbol{q}}k.
\end{equation}
One may easily show that for any $\boldsymbol{q},\boldsymbol{v}\in\mathbb{H}$,
\begin{equation}\label{equation:3.3}
(\boldsymbol{q}^*)^*=\boldsymbol{q},\qquad |\boldsymbol{q}^*|=|\boldsymbol{q}|, \qquad (\boldsymbol{q}\boldsymbol{v})^*=\boldsymbol{v}^*\boldsymbol{q}^*.
\end{equation}

\medskip
We now define a mapping between the quaternion algebra to take advantage of the star conjugate. Consider the mapping,
\begin{equation}\label{equation:3.4}
\boldsymbol{q}\mapsto\boldsymbol{x}=\boldsymbol{q}\boldsymbol{q}^*.
\end{equation}
Immediately we notice,
\begin{equation}\label{equation:3.5}
\boldsymbol{x}^*=(\boldsymbol{q}\boldsymbol{q}^*)^*=(\boldsymbol{q}^*)^*\boldsymbol{q}^*=\boldsymbol{q}\boldsymbol{q}^*=\boldsymbol{x},
\end{equation}
which implies $\boldsymbol{x}$ has the form
\begin{equation}\label{equation:3.6}
\boldsymbol{x}=x_0+x_1i+x_2j.
\end{equation}
To find the values of $\{x_i\}_0^2$, we do the quaternion multiplication. The result yields
\begin{eqnarray}
x_0&=q_0^2-q_1^2-q_2^2+q_3^2\nonumber\\
x_1&=2(q_0q_1-q_2q_3)\label{equation:3.7}\\
x_2&=2(q_0q_2+q_1q_3)\nonumber.
\end{eqnarray}
This is the KS transformation in its classical form. Thus the KS transformation $q=(q_0,q_1,q_2,q_3)\in\mathbb{R}^4\mapsto x=(x_0,x_1,x_2)\in\mathbb{R}^3$ is given by the quaternion relation
\begin{equation}\label{equation:3.8}
\boldsymbol{x}=\boldsymbol{q}\boldsymbol{q}^*,
\end{equation}
where $\boldsymbol{x}=x_0+x_1i+x_2j$, $\boldsymbol{q}=q_0+q_1i+q_2j+q_3k$, and $\boldsymbol{q}^*$ is defined as (\ref{equation:3.1}).

\section{Differentiation}

Since the mapping (\ref{equation:3.4}) involving the star conjugate maps from four dimensions to three, we are left with one degree of freedom. We take advantage of this by simplifying differentiation. To do this we impose a bilinear relation between a quaternion $\boldsymbol{q}$ and its differential, $d\boldsymbol{q}$, being
\begin{equation}\label{equation:4.1}
\boldsymbol{q}\, d\boldsymbol{q}^*=d\boldsymbol{q}\, \boldsymbol{q}^*.
\end{equation}
Note (\ref{equation:4.1}) has the form of a commutator. The differential of the star conjugate mapping (\ref{equation:3.4}) then becomes
\begin{equation}\label{equation:4.2}
d\boldsymbol{x}=d\boldsymbol{q}\, \boldsymbol{q}^*+\boldsymbol{q}\, d\boldsymbol{q}^*=2d\boldsymbol{q}\, \boldsymbol{q}^*.
\end{equation}
Using these techniques, \citet{Waldvogel2008} reduces the spatial Kepler problem to
\begin{equation}\label{equation:5.16}
2\boldsymbol{u}''+h\boldsymbol{u}=0,
\end{equation}
with the parameter
\begin{equation}\label{equation:5.5}
dt=r\cdot d\tau, \quad \frac{d}{d\tau}()=()',
\end{equation}
where $t$ is time, $r$ is the radius, and $u=q^*$. For later importance, we note the energy integral:
\begin{equation}\label{equation:5.13}
-2\left|\boldsymbol{q}'\right|^2=-\mu+rh.
\end{equation}

\section{Deriving Kepler's laws}
In this section, we will derive Kepler's first law in the standard method for regularizing celestial mechanics, then after which we will prove Kepler's second and third laws using new techniques based upon differentiation and integration.
\subsection{Kepler's first law}

We first solve (\ref{equation:5.16}) to find the general solution,
\begin{equation}\label{equation:6.1}
\boldsymbol{u}=A\cos(\omega\tau)+B\sin(\omega\tau),\ \omega=\sqrt{\frac{h}{2}}, \ A,B\in\mathbb{H}.
\end{equation}
There are two equations we note before continuing. The first is the $\cos$ and $\sin$ functions are linear in $\mathbb{H}$, and thus are two dimensional. The second is since the equation for motion of $\boldsymbol{u}$ is two dimensional, we may rotate it into the complex plane without loss of generality, to make solving the fine details a bit easier. So after rotating to the complex plane, we rotate once more, such that the apocenter is a real number greater than zero. 

\medskip
Note that we are not scaling, translating, or making any changes to the equation at all, but merely changing its direction. Once we have solutions to the equation of motion in this form, we may rotate it back to the original equation of motion as needed.

\medskip
After doing our rotations, (\ref{equation:6.1}) becomes,
\begin{equation}\label{equation:6.2}
\boldsymbol{u}=A\cos(\omega\tau)+iB\sin(\omega\tau), \ A,B\in\mathbb{R}.
\end{equation}
Earlier we made the substitution $\boldsymbol{x}=\boldsymbol{q}\boldsymbol{q}^*=\boldsymbol{u}^*\boldsymbol{u}$, but with our rotations we have $\boldsymbol{u}\in\mathbb{C}$, not $\boldsymbol{u}\in\mathbb{H}$. Thus for our substitution, we need not include the star conjugate, so we have $\boldsymbol{x}=\boldsymbol{u}^2$. Therefore our equation for motion of $\boldsymbol{x}$ is,
\begin{align*}
\boldsymbol{x}&=\left(A\cos(\omega\tau)+iB\sin(\omega\tau)\right)^2\\
&=A^2\cos^2(\omega\tau)-B^2\sin^2(\omega\tau)+2AB\cos(\omega\tau)\sin(\omega\tau)
\end{align*}
\begin{equation}\label{equation:6.3}
=\frac{A^2-B^2}{2}+\frac{A^2+B^2}{2}\cos(2\omega\tau)+iAB\sin(2\omega\tau).
\end{equation}
Since $2\omega\tau$ defines the angle of the planet from the center, it must be the eccentric anomaly. Thus the eccentric anomaly, which we denote $E$, must be, 
\begin{equation}\label{equation:6.4}
E=2\omega\tau=\sqrt{2h}\tau.
\end{equation}
We substitute (\ref{equation:6.4}) into (\ref{equation:6.3}) to get,
\begin{equation}\label{equation:6.5}
\boldsymbol{x}=\frac{A^2-B^2}{2}+\frac{A^2+B^2}{2}\cos(E)+iAB\sin(E).
\end{equation}
From equation (\ref{equation:6.5}) we see the center, which we denote $c$, must be,
\begin{equation}\label{equation:6.6}
c=\frac{A^2-B^2}{2}.
\end{equation}
Since we know $(A^2+B^2)/2\geq AB$ for all $A$ and $B$, we have,
\begin{equation}\label{equation:6.7}
a=\frac{A^2+B^2}{2},
\end{equation}
where $a$ is the semi-major axis, and,
\begin{equation}\label{equation:6.8}
b=AB,
\end{equation}
where $b$ is the semi-minor axis. We now may prove Kepler's first law.
\begin{theorem}[The Law of Orbits]\label{theorem:1}
All planets move in elliptical orbits, with the sun at one focus.
\end{theorem}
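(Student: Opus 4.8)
The plan is to read the Cartesian equation of the orbit straight off (\ref{equation:6.5}) and then show that the origin — the seat of the central body — is one of the two foci of the resulting conic. Writing $\boldsymbol{x}=x_0+ix_1$ in the rotated complex plane of Section on Kepler's first law, equations (\ref{equation:6.5})–(\ref{equation:6.8}) give the parametrization $x_0=c+a\cos E$, $x_1=b\sin E$, with $E$ the eccentric anomaly. Eliminating $E$ through $\cos^2E+\sin^2E=1$ yields
\begin{equation}
\frac{(x_0-c)^2}{a^2}+\frac{x_1^2}{b^2}=1,
\end{equation}
the equation of an ellipse centered at $(c,0)$ with axes $a$ and $b$ along the coordinate directions. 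That this is a genuine, non-degenerate ellipse is immediate: $a=\tfrac{A^2+B^2}{2}\ge AB=b>0$ is exactly the arithmetic–geometric-mean inequality already used at (\ref{equation:6.7}), with equality giving the limiting circle $A=B$.

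Next I would locate the foci. The linear eccentricity is $\sqrt{a^2-b^2}$, and the key identity is
\begin{equation}
a^2-b^2=\left(\frac{A^2+B^2}{2}\right)^2-(AB)^2=\frac{(A^2-B^2)^2}{4}=c^2,
\end{equation}
using (\ref{equation:6.6}). Since the major axis lies along the real direction (because $a\ge b$ multiplies $\cos E=$ the $x_0$-component), the two foci sit at $(c\pm|c|,\,0)$; that is, one focus is exactly the origin $(0,0)$ and the other is $(2c,0)$, regardless of the sign of $c$. As the origin is precisely where the central body resides in the Kepler problem, the sun occupies one focus of every orbit.

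Finally I would undo the rotations. The reductions made just before (\ref{equation:6.2}) were pure rotations — the text emphasizes that nothing is scaled or translated — and rotations are isometries fixing the origin, so they carry ellipses to ellipses and the focus at the origin back to the focus at the origin. Hence the conclusion is frame-independent and transfers to the original equation of motion, which is the Law of Orbits. I expect the only real obstacle to be bookkeeping rather than substance: confirming that the normalization forcing the apocenter $c+a=A^2$ to be a positive real is consistent with taking $b=AB\ge 0$, and making explicit that ``the sun at one focus'' is nothing more than the rotation-invariant relation $\sqrt{a^2-b^2}=|c|$ established above, so that no further computation is needed once that identity is in hand.
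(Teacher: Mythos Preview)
Your proof is correct and follows essentially the same approach as the paper's: both hinge on the identity $a^2-b^2=c^2$ (equivalently (\ref{equation:6.9})) to place the origin at a focus of the ellipse described by (\ref{equation:6.5}). Your version is simply more explicit, supplying the Cartesian equation of the ellipse, the exact foci locations $(c\pm|c|,0)$, and the rotation-invariance argument that the paper leaves tacit.
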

\begin{proof}
Note that the equation for the planets motion has the sun placed at the origin. Also note that the equation for motion is obviously either an ellipse, or a perfect circle. When using the semi-major axis $a$, the semi-minor axis $b$, and the center $c$, we find,
\begin{equation}\label{equation:6.9}
a^2-b^2-c^2=0.
\end{equation}
This is the definition of a focus of an ellipse, thus the planet orbits the origin. Also if the orbit is a perfect circle, the center is at zero, and thus orbits the sun. Therefore since the sun is at the origin, the planet orbits the sun at one focus.
\end{proof}

\subsection{Kepler's second law}

Since we have derived Kepler's first law, we know that
\begin{equation}\label{equation:6.10}
e=\frac{c}{a},
\end{equation}
where $e$ is the eccentricity of the orbit. We then use (\ref{equation:6.6}), (\ref{equation:6.7}), and (\ref{equation:6.10}) to solve for $A$ and $B$. We find,
\begin{equation}\label{equation:6.11}
A=\sqrt{a(1+e)}, \qquad B=\sqrt{a(1-e)}.
\end{equation}
Plugging (\ref{equation:6.11}) into (\ref{equation:6.5}) yields
\begin{equation}\label{equation:6.12}
\boldsymbol{x}=a(e+\cos(E))+ia\sqrt{1-e^2}\sin(E).
\end{equation}
Using (\ref{equation:6.12}) we calculate the magnitude of $\boldsymbol{x}$ to find,
\begin{equation}\label{equation:6.13}
r=a(1+e\cos(E)).
\end{equation}
We are now prepared to derive Kepler's second law.
\begin{theorem}[The Law of Areas]\label{theorem:3}
A line that connects a planet to the sun sweeps out equal areas in equal times.
\end{theorem}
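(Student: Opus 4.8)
The plan is to compute the areal velocity $dA/dt$ directly and show that it is a constant of the orbit; Kepler's second law then follows immediately, since the area swept over any interval $[t_0,t_0+\Delta t]$ is $\int_{t_0}^{t_0+\Delta t}\frac{dA}{dt}\,dt=\frac{dA}{dt}\cdot\Delta t$, which depends on $\Delta t$ alone. Since the area swept by the segment from the origin to the planet is invariant under the rotations used to bring $\boldsymbol{u}$ (hence $\boldsymbol{x}$) into the complex plane in the derivation of (\ref{equation:6.2}) — these are isometries fixing the origin — it suffices to work with the explicit parametrization (\ref{equation:6.12}). Writing $\boldsymbol{x}=\xi+i\eta$ with $\xi=a(e+\cos E)$ and $\eta=a\sqrt{1-e^2}\sin E$, the infinitesimal swept area is $dA=\tfrac12(\xi\,d\eta-\eta\,d\xi)=\tfrac12\operatorname{Im}(\overline{\boldsymbol{x}}\,d\boldsymbol{x})$, the sun sitting at the origin.

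First I would differentiate (\ref{equation:6.12}) with respect to the eccentric anomaly $E$ and substitute into $\xi\,d\eta-\eta\,d\xi$. A one-line trigonometric simplification using $\cos^2E+\sin^2E=1$ collapses this to $\xi\,d\eta-\eta\,d\xi=a^2\sqrt{1-e^2}\,(1+e\cos E)\,dE$. The key move is then to recognize the radius formula (\ref{equation:6.13}), $1+e\cos E=r/a$, so that $dA=\tfrac12\,a\sqrt{1-e^2}\,r\,dE$; note the stray factor of $r$ that remains.

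Next I would pass from $E$ to physical time using the two relations inherited from the regularization: the fictitious-time substitution $dt=r\,d\tau$ from (\ref{equation:5.5}), and $E=\sqrt{2h}\,\tau$ from (\ref{equation:6.4}), which gives $dE=\sqrt{2h}\,d\tau$. Multiplying, $r\,dE=\sqrt{2h}\,r\,d\tau=\sqrt{2h}\,dt$, so the factor of $r$ is absorbed and $dA=\tfrac12\,a\sqrt{1-e^2}\,\sqrt{2h}\,dt$. The coefficient on the right is a constant of the orbit, so $dA/dt$ is constant, which is exactly the Law of Areas. (If one wishes, evaluating the energy integral (\ref{equation:5.13}) at apocenter with the known form of $\boldsymbol{u}$ gives $2h=\mu/a$, identifying the constant as the familiar areal velocity $\tfrac12\sqrt{\mu a(1-e^2)}$, but this is not needed for the statement.)

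There is no serious technical obstacle: the computation is short and routine. The one thing that must be done at the right moment is the substitution $dt=r\,d\tau$, because it is precisely the factor of $r$ produced in $dA$ by parametrizing with $E$ that it cancels. Put differently, the sweep rate $dA/d\tau$ (equivalently $dA/dE$) is \emph{not} constant — it is proportional to $r$ — and the regularizing change of variables supplies exactly the missing $r$ that turns it into the constant $dA/dt$. Recognizing this cancellation as the crux is essentially the whole content of the proof.
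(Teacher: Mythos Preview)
Your proof is correct and follows the same overall strategy as the paper: compute the areal velocity from the explicit parametrization (\ref{equation:6.12}), then use $dE=\sqrt{2h}\,d\tau$ from (\ref{equation:6.4}) together with $dt=r\,d\tau$ from (\ref{equation:5.5}) to absorb the stray factor of $r$ and arrive at the constant $d\Sigma/dt=a\sqrt{\tfrac{h}{2}(1-e^2)}$, exactly the paper's (\ref{equation:6.28}). The one genuine difference is in how the area element is written. The paper introduces the polar angle $\varphi=\tan^{-1}(\boldsymbol{x}_1/\boldsymbol{x}_0)$ and works with $d\Sigma=\tfrac12 r^2\,d\varphi$, which forces a lengthy computation of $\tfrac{d}{dE}\tan^{-1}(\boldsymbol{x}_1/\boldsymbol{x}_0)$ occupying equations (\ref{equation:6.19})--(\ref{equation:6.27}); you instead use the equivalent Cartesian form $d\Sigma=\tfrac12(\xi\,d\eta-\eta\,d\xi)=\tfrac12\operatorname{Im}(\overline{\boldsymbol{x}}\,d\boldsymbol{x})$, and the whole trigonometric reduction collapses to a single line. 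Your choice is the cleaner one here, precisely because the orbit is already handed to you in Cartesian (complex) coordinates; the paper's polar detour introduces and then laboriously undoes a change of variables. Both routes hinge on the same cancellation you single out --- the factor of $r$ produced by $dA/dE$ being eaten by the regularizing relation $dt=r\,d\tau$ --- so conceptually the proofs are identical, while computationally yours is substantially shorter.
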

\begin{proof}
Since $\boldsymbol{x}=a(e+\cos(E))+ia\sqrt{1-e^2}\sin(E)$, we have,
\begin{equation}\label{equation:6.14}
\boldsymbol{x}_0=a(e+\cos(E)), \qquad \boldsymbol{x}_1=a\sqrt{1-e^2}\sin(E).
\end{equation}
We then let,
\begin{equation}\label{equation:6.15}
\varphi=\tan^{-1}\left(\frac{\boldsymbol{x}_1}{\boldsymbol{x}_0}\right),
\end{equation}
and let the area swept out by the line segment be $\Sigma$.
\begin{figure}[!ht]
\centering
\begin{tikzpicture}[thick]

\coordinate (O) at (4,0);
\coordinate (A) at (0,0);
\coordinate (B) at (4,1.5);
\draw (O)--(A)--(B)--cycle;

\tkzLabelSegment[below=2pt](O,A){\textit{$r$}}
\tkzLabelSegment[right=2pt](O,B){\textit{$rd\varphi$}}
\draw[color=gray] (1,0.175) node {$\varphi$};

\end{tikzpicture}
\caption{Change in Area}\label{figure:2}
\end{figure}

From figure (\ref{figure:2}), we can define the change of area being swept out as,
\begin{equation}\label{equation:6.16}
\frac{d\Sigma}{dt}=\frac{1}{2}r^2\frac{d\varphi}{dt}=\frac{1}{2}r^2\frac{d\varphi}{dE}\frac{dE}{dt}=\frac{1}{2}r^2\frac{d\varphi}{dE}\frac{dE}{d\tau}\frac{d\tau}{dt}.
\end{equation}
We take the derivative of (\ref{equation:6.4}) to get,
\begin{equation}\label{equation:6.17}
\frac{dE}{d\tau}=\sqrt{2h}.
\end{equation}
We then substitute (\ref{equation:6.17}), and (\ref{equation:5.5}$_1$) into (\ref{equation:6.16}), which yields
\begin{equation}\label{equation:6.18}
\frac{d\Sigma}{dt}=\frac{1}{2}r^2\left(\frac{d}{dE}\tan^{-1}\frac{\boldsymbol{x}_1}{\boldsymbol{x}_0}\right)\sqrt{2h}\frac{1}{r}.
\end{equation}
We then take the derivative, cancel one $r$, substitute $r=a(1+e\cos(E))$, and move the constants to the left, which transforms (\ref{equation:6.18}) into
\begin{equation}\label{equation:6.19}
a\sqrt{\frac{h}{2}(1-e^2)}(1+e\cos(E))\left(\frac{f(E)}{g(E)}\right),
\end{equation}
where
\begin{equation}
f(E)=\frac{\cos(E)}{\cos(E)+e}+\frac{\sin^2(E)}{(\cos(E)+e)^2},
\end{equation}
and
\begin{equation}
g(E)=1-\frac{(e^2-1)\sin^2(E)}{(\cos(E)+e)^2}.
\end{equation}

\medskip
We now reduce the fraction. First we look at the numerator, $f(E)$. We multiply the top and bottom of the first fraction in the numerator by its denominator, and combine with the second fraction, yielding,
\begin{equation}\label{equation:6.22}
\frac{e\cos(E)+1}{(\cos(E)+e)^2}.
\end{equation}
We then multiply the $1$ in $g(E)$ on a common denominator, and combining with the second fraction to get,
\begin{equation}\label{equation:6.23}
\frac{(\cos(E)+e)^2-(e^2-1)\sin^2(E)}{(\cos(E)+e)^2}.
\end{equation}
Since the $f(E)$ and $g(E)$ have the same denominator, we cancel it, which reduces the fraction $f(E)/g(E)$ to
\begin{equation}\label{equation:6.24}
\frac{e\cos(E)+1}{(\cos(E)+e)^2-(e^2-1)\sin^2(E)}.
\end{equation}
Note that the numerator is identical to $r/a$. We multiply in our $r$ term to get,
\begin{equation}\label{equation:6.25}
\frac{(e\cos(E)+1)^2}{(\cos(E)+e)^2-(e^2-1)\sin^2(E)}.
\end{equation}
We then expand the numerator and denominator to get,
\begin{equation}\label{equation:6.26}
\frac{e^2\cos^2(E)+2e\cos(E)+1}{\cos^2(E)\!+\!2e\cos(E)+e^2-e^2\sin^2(E)\!+\!\sin^2(E)}.
\end{equation}
Note we can now reduce the numerator and denominator to be the same value, so
\begin{equation}\label{equation:6.27}
\frac{e^2(1-\sin^2(E))+2e\cos(E)+1}{e^2(1-\sin^2(E))+2e\cos(E)+1}=1.
\end{equation}
Thus substituting (\ref{equation:6.27}) into (\ref{equation:6.19}) yields,
\begin{equation}\label{equation:6.28}
\frac{d\Sigma}{dt}=a\sqrt{\frac{h}{2}(1-e^2)}.
\end{equation}
We now have $d\Sigma/dt$ equal to a constant, which proves Kepler's second law.
\end{proof}
\subsection{Kepler's third law}

We now work to represent the energy $h$ with variables describing the motion of the planet. We first note that (\ref{equation:5.13}) gives us one equation representing ${2\left|\boldsymbol{q}'\right|^2=2\left|\boldsymbol{u}'\right|^2}$. We then take the derivative of (\ref{equation:6.2}), and find that,
\begin{equation}\label{equation:6.29}
2\left|\boldsymbol{u}'\right|^2=ah(1-e\cos(E)).
\end{equation}
We set (\ref{equation:6.29}) and the negative of (\ref{equation:5.13}) equal, which yields $2ah=\mu$, or,
\begin{equation}\label{equation:6.30}
h=\frac{\mu}{2a}.
\end{equation}
We now are prepared to derive Kepler's third law.
\begin{theorem}[The Law of Periods]\label{theorem:2}
The square of the period of any planet is proportional to the cube of the semi-major axis of its orbit. 
\end{theorem}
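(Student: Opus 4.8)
The plan is to compute the period $T$ directly by integrating the time differential over one full revolution and then eliminating the energy $h$ with the relation $h=\mu/(2a)$ established in (\ref{equation:6.30}). The key observation is that a complete orbit corresponds to the eccentric anomaly $E$ increasing by $2\pi$, so $T$ is obtained by integrating $dt$ as $E$ runs from $0$ to $2\pi$.

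First I would combine the time reparametrization $dt=r\,d\tau$ from (\ref{equation:5.5}) with the derivative $dE/d\tau=\sqrt{2h}$ from (\ref{equation:6.17}) to write $dt=\frac{r}{\sqrt{2h}}\,dE$, and then substitute $r=a(1+e\cos(E))$ from (\ref{equation:6.13}). Integrating over a period gives
\begin{equation}
T=\int_0^{2\pi}\frac{a(1+e\cos(E))}{\sqrt{2h}}\,dE=\frac{a}{\sqrt{2h}}\left[E+e\sin(E)\right]_0^{2\pi}=\frac{2\pi a}{\sqrt{2h}},
\end{equation}
since the $e\sin(E)$ term contributes nothing over a full period. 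Alternatively, and more in the spirit of this section, one could invoke Kepler's second law: the areal rate $d\Sigma/dt$ in (\ref{equation:6.28}) is constant, so $T$ equals the total ellipse area $\pi ab$ (with $b=AB=a\sqrt{1-e^2}$ from (\ref{equation:6.8}) and (\ref{equation:6.11})) divided by that constant rate, which yields the same $T$ once the $\sqrt{1-e^2}$ factors cancel.

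Next I would substitute $h=\mu/(2a)$ from (\ref{equation:6.30}), so that $\sqrt{2h}=\sqrt{\mu/a}$ and
\begin{equation}
T=\frac{2\pi a}{\sqrt{\mu/a}}=\frac{2\pi a^{3/2}}{\sqrt{\mu}},\qquad\text{hence}\qquad T^2=\frac{4\pi^2}{\mu}\,a^3.
\end{equation}
Because $\mu$ is the same gravitational parameter for every body orbiting the sun, the coefficient $4\pi^2/\mu$ is independent of the particular planet, and therefore $T^2\propto a^3$, which is Kepler's third law.

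I do not expect a serious obstacle here; the argument reduces to a single definite integral. The only points needing care are (i) confirming that one revolution is exactly $E\in[0,2\pi]$, which is immediate from the parametrization (\ref{equation:6.12}), and (ii) stating explicitly that ``proportional'' is meaningful precisely because $\mu$ does not depend on which planet is chosen — the step that promotes a per-orbit identity to Kepler's universal law.
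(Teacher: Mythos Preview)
Your proposal is correct and follows essentially the same route as the paper: both combine $dt=r\,d\tau$, $dE/d\tau=\sqrt{2h}$, and $r=a(1+e\cos E)$ to reduce $T$ to $\int_0^{2\pi}(1+e\cos E)\,dE$ and then invoke $h=\mu/(2a)$. The only cosmetic difference is that the paper first packages the constant as the mean motion $n:=\sqrt{\mu/a^3}$ and shows $nT=2\pi$, whereas you compute $T$ directly and substitute at the end.
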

\begin{proof}
Firstly we define,
\begin{equation}\label{equation:6.31}
n:=\sqrt{\frac{\mu}{a^3}}.
\end{equation}
We square and divide (\ref{equation:6.31}) by two to show
\begin{equation}\label{equation:6.32}
\frac{n^2}{2}=\frac{\mu}{2a^3}.
\end{equation}
We substitute (\ref{equation:6.30}) into (\ref{equation:6.32}) to get
\begin{equation}\label{equation:6.33}
\frac{n^2}{2}=\frac{h}{a^2}.
\end{equation}
We then multiply by two, take the square root, and substitute (\ref{equation:6.17}) to get
\begin{equation}\label{equation:6.34}
n=\frac{1}{a}\frac{dE}{d\tau}=\frac{1}{a}\frac{dE}{dt}\frac{dt}{d\tau}.
\end{equation}
\\From (\ref{equation:5.5}$_1$) we substitute $dt/d\tau$ and rearrange to get
\begin{equation}\label{equation:6.35}
ndt=\frac{r}{a}dE.
\end{equation}
We then substitute (\ref{equation:6.13}) in and take the definite integral over the period $T$, in which the eccentric anomaly goes from $0$ to $2\pi$,
\begin{eqnarray}
\int_0^T ndt&=&\int_0^{2\pi}\frac{a(1+e\cos(E))}{a}dE\label{equation:6.36},\\
nT&=&2\pi\label{equation:6.37}.
\end{eqnarray}
Finally we substitute (\ref{equation:6.31}) into (\ref{equation:6.37}), square both sides, and rearrange to yield
\begin{equation}\label{equation:6.38}
\frac{T^2}{a^3}=\frac{4\pi^2}{\mu}.
\end{equation}
Since we have the ratio of the period squared to the semi-major axis cubed being a constant, the derivation is complete.
\end{proof}
Now that Kepler's third law is derived, we note that $n$ is actually the \textit{mean motion} of the planet, and rearranging (\ref{equation:6.37}) shows that,
\begin{equation}\label{equation:6.39}
n=\frac{2\pi}{T}.
\end{equation}
We summarize by deriving the area of an ellipse.
\begin{theorem}[Area of an Ellipse]
The area of the ellipse generated by the orbit of the planet, $A_e$, is found to be
\begin{equation}\label{equation:6.40}
A_e=\pi a b.
\end{equation}
\end{theorem}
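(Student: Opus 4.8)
The plan is to combine the two descriptions of the orbit that have already been established: the constant areal velocity from Kepler's second law (Theorem~\ref{theorem:3}) and the period relation $nT = 2\pi$ from the proof of Kepler's third law. Since $d\Sigma/dt$ is constant and equals $a\sqrt{(h/2)(1-e^2)}$ by \eqref{equation:6.28}, integrating over one full period sweeps out the entire area of the ellipse exactly once, so $A_e = T \cdot a\sqrt{(h/2)(1-e^2)}$.

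The key steps, in order, are as follows. First I would write $A_e = \int_0^T (d\Sigma/dt)\, dt = T a\sqrt{(h/2)(1-e^2)}$ using \eqref{equation:6.28}. Next I would eliminate $T$ using $T = 2\pi/n$ from \eqref{equation:6.39}, and eliminate $n$ using the definition $n = \sqrt{\mu/a^3}$ from \eqref{equation:6.31}, giving $A_e = 2\pi a^{5/2}\mu^{-1/2}\sqrt{(h/2)(1-e^2)}$. Then I would substitute the energy relation $h = \mu/(2a)$ from \eqref{equation:6.30}, so that $\sqrt{h/2} = \sqrt{\mu}/(2\sqrt{a})$ and the $\mu$ factors cancel, leaving $A_e = 2\pi a^{5/2} \cdot \frac{1}{2\sqrt a}\sqrt{1-e^2} = \pi a^2\sqrt{1-e^2}$. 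Finally, recalling $b = AB$ with $A = \sqrt{a(1+e)}$ and $B = \sqrt{a(1-e)}$ from \eqref{equation:6.11}, we have $b = a\sqrt{1-e^2}$, hence $a^2\sqrt{1-e^2} = ab$ and $A_e = \pi a b$.

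Alternatively — and perhaps more cleanly — one could bypass the dynamical quantities entirely and compute the area directly from the parametrization \eqref{equation:6.12}, since $\boldsymbol{x} = a(e+\cos E) + i a\sqrt{1-e^2}\sin E$ traces the ellipse as $E$ runs over $[0,2\pi]$. Using the shoelace/Green's theorem formula $A_e = \tfrac12\oint(\boldsymbol{x}_0\,d\boldsymbol{x}_1 - \boldsymbol{x}_1\,d\boldsymbol{x}_0)$, the integrand simplifies to $\tfrac12 a^2\sqrt{1-e^2}(1 + e\cos E)\,dE$, whose integral over a full period is $\pi a^2\sqrt{1-e^2} = \pi a b$. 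I would probably present the first route since all its ingredients are already assembled in the preceding proofs, and only mention the second as a remark.

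I do not anticipate a genuine obstacle here; the statement is essentially a corollary of Theorems~\ref{theorem:3} and~\ref{theorem:2}. The only point requiring a small amount of care is justifying that one period of the motion sweeps the ellipse exactly once without overlap — i.e. that $E \in [0, 2\pi]$ is a faithful traversal — which follows immediately from the fact that \eqref{equation:6.12} is injective on $[0,2\pi)$ for $0 \le e < 1$. Beyond that, the derivation is just the bookkeeping of substituting \eqref{equation:6.28}, \eqref{equation:6.39}, \eqref{equation:6.31}, \eqref{equation:6.30}, and \eqref{equation:6.11} in sequence.
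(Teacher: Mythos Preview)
Your proposal is correct and follows essentially the same route as the paper: integrate the constant areal velocity \eqref{equation:6.28} over one period and then eliminate $T$, $n$, and $h$ via \eqref{equation:6.39}, \eqref{equation:6.31}, \eqref{equation:6.30}, finishing with $b=a\sqrt{1-e^2}$. The only cosmetic difference is that the paper substitutes $b=a\sqrt{1-e^2}$ at the outset (rewriting \eqref{equation:6.28} as $d\Sigma/dt=b\sqrt{h/2}$) rather than at the end, but the chain of substitutions is otherwise identical.
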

\begin{proof}
We substitute (\ref{equation:6.11}) into (\ref{equation:6.8}) to find
\begin{equation}\label{equation:6.41}
b=a\sqrt{1-e^2}.
\end{equation}
We then substitute (\ref{equation:6.41}) into (\ref{equation:6.28}) to find
\begin{equation}\label{equation:6.42}
\frac{d\Sigma}{dt}=b\sqrt{\frac{h}{2}}.
\end{equation}
We then multiply the $dt$ on both sides, and integrate. We want to integrate over the whole area of the ellipse, $A_e$, which takes the period, $T$, time, so we have
\begin{equation}\label{equation:6.43}
\begin{split}
\int_0^{A_e} d\Sigma=&\int_0^Tb\sqrt{\frac{h}{2}}dt\\
A_e=&bT\sqrt{\frac{h}{2}}.
\end{split}
\end{equation}
We then substitute (\ref{equation:6.30}), and (\ref{equation:6.37}) into (\ref{equation:6.43}),
\begin{equation}\label{equation:6.44}
A_e=b\frac{2\pi}{n}\sqrt{\frac{\mu}{4a}}.
\end{equation}
Thus after substituting in (\ref{equation:6.31}) and canceling terms, (\ref{equation:6.44}) yields
\begin{equation}\label{equation:6.45}
A_e=\pi a b.
\end{equation}
\end{proof}

\bibliography{main}
\end{document}